\documentclass[a4paper]{article}
\usepackage{amsmath,amssymb,amsthm}
\usepackage{colortbl}
\usepackage{tikz}
\usepackage{enumitem}
\usepackage{authblk}
\usepackage{algorithm}
\usepackage{algorithmic}
\usepackage{multirow}
\usepackage{bbm}
\usepackage{longtable}
\usepackage{cite}
\usepackage{algorithm}
\usepackage{algorithmic}

\bibliographystyle{plain}


\DeclareMathOperator*{\argmax}{argmax}
\DeclareMathOperator*{\argmin}{argmin}

\newcommand{\real}{\mathbb{R}}
\newcommand{\nat}{\mathbb{N}}

\newcommand{\set}[2]{\left\{#1 \; \left|\;\; #2 \right.\right\}}
\newcommand{\abs}[1]{\left|#1\right|}
\newcommand{\card}[1]{\abs{#1}}

\newtheorem{theorem}{Theorem}

\newtheorem{lemma}[theorem]{Lemma}
\newtheorem{corollary}[theorem]{Corollary}
\newtheorem{proposition}[theorem]{Proposition}

\newtheorem{assumption}{Assumption}


\title{On the Complexity and Approximation of the Maximum Expected Value All-or-Nothing Subset}
\author[1]{Noam Goldberg}
\author[2]{Gabor Rudolf}
\affil[1]{Bar-Ilan University, Ramat Gan, Israel\\
 \texttt{noam.goldberg@biu.ac.il}}
\affil[2]{Ko\c c University, Istanbul, Turkey\\
\texttt{grudolf@ku.edu.tr}}
\begin{document}
\maketitle
\begin{abstract}
An unconstrained nonlinear binary optimization problem of selecting a  maximum expected value subset of items is considered. Each item is associated with a profit and probability. Each of the items succeeds or fails independently with the given probabilities, and the profit is obtained in the event that all selected items succeed. The objective is to select a subset that maximizes the total value times the product of probabilities of the chosen items. The problem is proven NP-hard by a nontrivial reduction from subset sum. Then we develop a fully polynomial time approximation scheme (FPTAS) for this problem.


\end{abstract}

\section{Introduction}

In the \emph{maximum expected value all-or-nothing subset} problem, a decision maker seeks to maximize the expected value of a subset of activities $[n]=\{1,\ldots,n\}$, where  each activity $i\in [n]$ is associated with a positive profit $c_i$ and probability of success $ p_i$. The profits are earned in an all-or-nothing fashion -- the overall success of a subset of activities depends on the individual success of all of its independent member activities. Accordingly, the problem is
\[
\max_{S\in 2^{[n]}}\sum_{i\in S}c_i\prod_{j\in S}p_j. 
\] 
The problem arises in the design of serial reliability (or 1-out-of-$n$) systems in which each component may have a different value and reliability but deriving value from the system depends on all of the selected components being operational. For example, this objective function arises in failure-aware barter exchanges such as kidney-exchange cycles in which the failure of a single pair to barter may cause the entire chain or cycle of transactions to fail~\cite{Dickerson13}. In that setting arcs of a directed graph represent possible transplants and donations; $c_i$ would be the value of transplant $i$ (that connects some donor-patient pair), and $p_i$ is the probability of the transplant $i$ taking place. The current all-subset setting corresponds to a special case of a complete directed graph of possible transplants, where a particular node (patient-donor pair) is connected to all other graph nodes through certain (probability-one) arcs. Another setting is a utility-maximizing evader who may select a subset of elicit activities that are under inspection, and the evader does not receive any value if one or more of the selected covert activities are exposed. This problem is proposed as an extension of the basic model studied in~\cite{Goldberg17}. The problem may arise in network settings and a special case is that of disjoint edges that is the subject of the current paper, which is next shown to be NP-hard.

We find it convenient to formulate 
the problem as the (unconstrained) nonlinear mathematical program with binary decision variables $x_1,\ldots,x_n$, 
\begin{align}
\label{FORM3}
& \max_{x\in\{0,1\}^n} \sum_{i=1}^n c_{i}x_{i}\prod_{j=1}^np_j^{x_j}.
\end{align}

The complexity of several related, yet different, problems has been investigated in the literature. The minimization of a (continuous) positive bilinear objective function of two variables subject to linear inequality constraints has been shown to be (strongly) NP-hard in~\cite{Matsui96}. The maximization of a product of linear functions of binary decision variables has been shown to be NP-hard in~\cite{Hammer02}. Half-product pseudo-Boolean function minimization, a special case of unconstrained quadratic binary minimization, has been shown to be NP-hard in~\cite{Badics98}. For an extensive survey of pseudo-Boolean optimization including these special cases the reader may refer to~\cite{Boros02}.

Related cost-reliability problems, with a different objective function than~\eqref{FORM3}, include variants that have been shown to be solvable in polynomial time. Let $\mathcal S\subseteq 2^{[n]}$ denote a collection of feasible item subsets.
Then a general 
cost-reliability ratio minimization problems takes the form
\begin{equation}
\label{RATIOPROB}
\min_{S\in \mathcal S}\frac{\sum_{i\in S}c_i}{\prod_{i\in S} p_i}.
\end{equation}    
 Such problems include the minimization of the spanning-tree cost to reliability ratio~\cite{Chandrasekaran84} when $\mathcal S$ is the set of all  spanning trees of a given graph. Katoh \cite{Katoh92} considers a general cost-reliability ratio minimization problem under the  assumption that given $E\subseteq [n]$, the problem of determining $S\subseteq E$ with $S\in \mathcal S$ and minimum $\sum_{i\in S}c_i$ can be solved in polynomial time in $\card E$. In \cite{Katoh92} a fully polynomial time approximation scheme (FPTAS) is developed for this problem, but the computational complexity was unresolved and appears to remain open. Also~\cite{Katoh87} develops an FPTAS for a general quasiconcave minimization problem. 
 Note that in contrast, the maximization variant of the ratio problem~\eqref{RATIOPROB} with $\mathcal S=2^{[n]}$ can be solved in polynomial time via the Dinkelbach algorithm; see~\cite{Hansen14} and references therein.

We first establish the NP-hardness of the maximum expected value all-or-nothing subset. 
Then we develop an FPTAS for this problem.

\section{Maximum Expected Value All-Or-Nothing Subset - Complexity}

 First, observe that the objective function of problem~\eqref{FORM3} can be equivalently replaced (maintaining all optimal solutions) by the concave objective
\begin{equation}
\label{LOGTRANS}
z(x)=\ln\left(\sum_{i=1}^nc_{i}x_{i}\right)+\sum_{i=1}^n\ln p_{i}x_{i}.
\end{equation}

Note that if for some $i\in [n]$, $p_i=1$, then evidently $x^*_i=1$ in every $x^*$ is optimal for~\eqref{FORM3}, while similarly $p_i=0$ implies that $x^*_i=0$. Therefore, the following assumption is without loss of generality:
\begin{assumption}
	For $i\in [n]$, the probabilities satisfy the inequalities  $0<p_i<1$.
\end{assumption} 

\medskip

For fixed $M>1$ and $y>0$ let 
\[
f(y)=\ln y-\frac{y}{M}.
\] The following lemma establishes the optimal value of $f$ and also that to determine a maximizer of $f$ over the integers it suffices to be able to approximately evaluate $f$ with precision that is bounded by a function of $M$.

\begin{lemma}\label{CNVXLEM} Let $M > 1$ be an integer. Then, the function $f$ is concave, with a unique maximum at $f(M)=\ln(M)-1$. Furthermore, for any positive integer $N\neq M$ we have $f(M)-f(N)\geq\frac{1}{5M^2}$.
\end{lemma}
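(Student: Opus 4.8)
The plan is to handle the two claims separately: concavity with the location of the maximum (easy calculus), and then the quantitative gap $f(M)-f(N)\ge \frac1{5M^2}$ for integer $N\ne M$ (the part requiring care).

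For the first part, I would compute $f'(y)=\frac1y-\frac1M$ and $f''(y)=-\frac1{y^2}<0$ for $y>0$, so $f$ is strictly concave on $(0,\infty)$. Setting $f'(y)=0$ gives the unique critical point $y=M$, which is therefore the global maximum, with value $f(M)=\ln M - M/M = \ln M - 1$, as claimed.

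For the quantitative gap, write $g(t)=f(M)-f(M+t)=\frac{t}{M}-\ln\!\bigl(1+\frac{t}{M}\bigr)$ for $t\ge -(M-1)$, i.e.\ $M+t\ge 1$, and set $u=t/M\in[-(M-1)/M,\infty)$, so that $g=M\cdot h(u)$ where $h(u)=u-\ln(1+u)\ge 0$. Since $N$ is a positive integer different from $M$, the relevant values are $|t|\ge 1$, hence $|u|\ge 1/M$. I would split into two regimes. For $u\ge 1/M$ (i.e.\ $N>M$): it suffices to bound $h$ from below near $0$ since $h$ is increasing on $[0,\infty)$, so $h(u)\ge h(1/M)$; a Taylor estimate $h(u)\ge \frac{u^2}{2}-\frac{u^3}{3}\ge \frac{u^2}{2}(1-\frac{2}{3M})$ at $u=1/M$ gives $g\ge M\cdot\frac{1}{2M^2}(1-\frac{2}{3M})=\frac{1}{2M}(1-\frac{2}{3M})$, which is comfortably at least $\frac{1}{5M^2}$ (indeed of order $1/M$). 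For $-1<u<0$ (the case $N<M$, only possible when $M\ge 2$, and then the smallest $N$ is $N=M-1$, i.e.\ $u=-1/M$ is the worst case since $h$ is decreasing on $(-1,0)$): I would use $h(u)\ge \frac{u^2}{2}$, valid for $u\le 0$ because the alternating-looking expansion actually satisfies $-\ln(1+u)=\sum_{k\ge1}\frac{(-u)^k}{k}$ with all terms $\ge \frac{u^2}{2}$-contributing nonnegatively when $u<0$; more simply, $h''(u)=\frac{1}{(1+u)^2}\ge 1$ on $(-1,0]$ so $h(u)\ge \frac{u^2}{2}$ by Taylor with remainder. Thus $g\ge M\cdot\frac{1}{2M^2}=\frac{1}{2M}\ge \frac{1}{5M^2}$.

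The main obstacle is simply making the endpoint estimate near $u=\pm 1/M$ tight enough and clean enough: one must be slightly careful that the cubic-correction term in the $u>0$ case does not swamp the quadratic main term, which is why restricting to $u\ge 1/M$ with $M\ge 1$ integer (so $u\le 1$ at worst is not even needed — monotonicity reduces to the endpoint) keeps the correction at the harmless $O(1/M)$ level. The constant $5$ is generous: the true bound is of order $\frac{1}{2M}$, so once the Taylor inequalities $h(u)\ge\frac{u^2}{2}-\frac{|u|^3}{3}$ (for $u\ge0$) and $h(u)\ge\frac{u^2}{2}$ (for $u\le0$) are in hand, the conclusion follows by plugging in $|u|=1/M$ and using $M\ge 1$ (resp.\ $M\ge 2$) to absorb the correction, with no delicate optimization required.
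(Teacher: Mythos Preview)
Your overall strategy---reducing to the worst cases $N=M\pm 1$ by monotonicity of $h(u)=u-\ln(1+u)$ on each side of $0$, then bounding $h$ below by a Taylor polynomial---is sound, but there is a computational slip that propagates through the rest. With $u=t/M$ you have
\[
g(t)=\frac{t}{M}-\ln\Bigl(1+\frac{t}{M}\Bigr)=u-\ln(1+u)=h(u),
\]
not $g=M\cdot h(u)$. Consequently your stated bounds $\frac{1}{2M}(1-\frac{2}{3M})$ and $\frac{1}{2M}$ are off by a factor of $M$, and your remark that ``the true bound is of order $\frac{1}{2M}$'' is incorrect; the actual gap is of order $\frac{1}{2M^2}$, which is exactly why the constant $\frac{1}{5M^2}$ in the lemma is the right scale. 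Fortunately the fix is immediate: with the correct $g=h(u)$, plugging $|u|=1/M$ into your inequalities $h(u)\ge \frac{u^2}{2}-\frac{u^3}{3}$ (for $u>0$) and $h(u)\ge\frac{u^2}{2}$ (for $u<0$) yields $g\ge \frac{1}{2M^2}\bigl(1-\frac{2}{3M}\bigr)$ and $g\ge\frac{1}{2M^2}$ respectively, both of which are at least $\frac{1}{5M^2}$ for every integer $M\ge 2$.

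By comparison, the paper's argument avoids Taylor expansions altogether: it uses concavity directly, bounding $f(M)-f(M\pm 1)$ below by $f(M\mp\tfrac12)-f(M\pm 1)$ and then by $\tfrac12\,|f'(M\pm\tfrac12)|$, which evaluates to $\frac{1}{4M^2\mp 2M}\ge\frac{1}{5M^2}$. Both routes work; the paper's is slightly shorter and gives marginally sharper explicit constants, while yours isolates the universal function $h(u)=u-\ln(1+u)$ and would generalize more readily if one needed higher-order information about the gap.
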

\begin{proof}
	Since $f'(y)=\frac{1}{y}-\frac{1}{M}$ has a unique zero at $y=M$, and $f''(y)=\frac{-1}{y^2}<0$ holds for all $y>0$, the first part of our claim immediately follows. Keeping in mind that $f(y)$ is concave with a unique maximum at $y=M$, for any positive integer $N<M$ we have 
	$$f(M)-f(N)\geq f(M-\frac{1}{2})-f(M-1)\geq\frac{1}{2} f'(M-\frac{1}{2}) =\frac{1}{2}\left(\frac{1}{M-\frac{1}{2}}-\frac{1}{M}\right)=\frac{1}{4M^2-2M}.$$
	Similarly, for any integer $N>M$ we have
	$$f(M)-f(N)\geq f(M+\frac{1}{2})-f(M+1)\geq\frac{-1}{2} f'(M+\frac{1}{2}) =\frac{-1}{2}\left(\frac{1}{M+\frac{1}{2}}-\frac{1}{M}\right)=\frac{1}{4M^2+2M}.$$
	As $\frac{1}{4M^2-2M}\geq\frac{1}{4M^2+2M}\geq\frac{1}{5M^2}$ holds for any integer $M>1$, the proposition follows.
\end{proof}

In order to prove NP-hardness, we first show that a given instance of the subset sum problem can be decided by solving a maximum expected value all-or-nothing subset problem with the logarithmically
 transformed  objective~\eqref{LOGTRANS} and the $\ln p_i$ values as the input parameters. 

\begin{lemma}\label{MAXPROP} Let $c_1,\dots,c_n$ and $M$ be positive integers. Then, there exists an $x\in\{0,1\}^n$ such that $\sum_{i=1}^nc_ix_i=M$ if and only if the optimal objective value of the following maximization problem equals $\ln M-1$.
\begin{equation}\label{maximization}\max\limits_{{x}\in\{0,1\}^n}\ln\left(\sum\limits_ {i=1}^nx_ic_i\right)-
\frac{1}{M}\sum\limits_ {i=1}^nx_ic_i
\end{equation}
\end{lemma}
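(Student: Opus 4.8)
The plan is to reduce the biconditional to the objective function $f(y) = \ln y - y/M$ analyzed in Lemma~\ref{CNVXLEM}, observing that the objective of~\eqref{maximization} is exactly $f\!\left(\sum_{i=1}^n x_i c_i\right)$. Since each $c_i$ is a positive integer and $x \in \{0,1\}^n$, the quantity $\sum_{i=1}^n x_i c_i$ is a nonnegative integer; moreover, provided not all $x_i$ are zero it is a \emph{positive} integer, so that the analysis of $f$ over the positive integers applies directly. (The all-zeros vector gives $\ln 0 = -\infty$, so it is never optimal and can be disregarded.)

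For the forward direction, suppose there is an $x \in \{0,1\}^n$ with $\sum_{i=1}^n x_i c_i = M$. Then the objective value at this $x$ is $f(M) = \ln M - 1$ by Lemma~\ref{CNVXLEM}. On the other hand, for every $x \in \{0,1\}^n$ (with at least one nonzero entry) the value $N := \sum_{i=1}^n x_i c_i$ is a positive integer, and Lemma~\ref{CNVXLEM} gives $f(N) \le f(M) = \ln M - 1$, with equality exactly when $N = M$. Hence the optimal value of~\eqref{maximization} is precisely $\ln M - 1$, attained.

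For the converse, suppose the optimal value of~\eqref{maximization} equals $\ln M - 1$. Let $x^*$ be an optimizer (one exists since the feasible set is finite, again discarding the all-zeros point whose value is $-\infty < \ln M - 1$ because $M \ge 1$), and set $N := \sum_{i=1}^n x_i^* c_i$, a positive integer. Then $f(N) = \ln M - 1 = f(M)$. By the strict inequality $f(M) - f(N) \ge \tfrac{1}{5M^2} > 0$ for every positive integer $N \ne M$ from Lemma~\ref{CNVXLEM}, we must have $N = M$, i.e.\ $\sum_{i=1}^n x_i^* c_i = M$, which is the desired subset.

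I do not anticipate a genuine obstacle here: the bulk of the work — namely the quantitative gap $f(M) - f(N) \ge \tfrac{1}{5M^2}$ that makes $M$ the unique integer maximizer — has already been done in Lemma~\ref{CNVXLEM}. The only points requiring a little care are (i) noting that the objective of~\eqref{maximization} is literally $f$ evaluated at an integer, so Lemma~\ref{CNVXLEM} applies verbatim, and (ii) handling the degenerate all-zeros vector, which is harmless because its objective value is $-\infty$ and $\ln M - 1$ is finite. The $\tfrac{1}{5M^2}$ bound itself is not needed for this lemma (mere strictness of the maximum suffices), but it will be the quantitative ingredient when this lemma is later invoked inside the NP-hardness reduction, where only an approximate evaluation of the objective is available.
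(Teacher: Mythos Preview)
Your proof is correct and follows essentially the same approach as the paper: both recognize the objective as $f\!\left(\sum_i x_i c_i\right)$ and invoke Lemma~\ref{CNVXLEM} to conclude that the optimum equals $\ln M - 1$ precisely when some feasible $x$ achieves $\sum_i x_i c_i = M$. Your version is a bit more careful (explicitly dispatching the all-zeros vector and noting that only the uniqueness, not the quantitative $\tfrac{1}{5M^2}$ gap, is needed here), but the argument is the same.
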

\begin{proof} First, 
by Lemma~\ref{CNVXLEM} $f$ has a unique maximum at $y=M$, $f(M)=\ln(M)-1$. Also note that the objective function in \eqref{maximization} can be written as $f\left(\sum\limits_ {i=1}^nx_ic_i\right)$.

Now assume that $\sum_{i=1}^nc_ix_i=M$ holds for some ${x}\in\real^n$.  Since ${x}$ is a solution of \eqref{maximization} with objective value $\ln(M)-1$, it is also an optimal solution according to our observation. 

Similarly, if ${x}\in\{0,1\}$ is an optimal solution of \eqref{maximization} with objective value $\ln(M)-1$ then by our observation we have $\sum\limits_{i=1}^nx_ic_i=M$.
\end{proof}

The equivalence of the optimization problems following the log transformation of the objective~\eqref{LOGTRANS} and the fact that the $c_i$ values are integer together imply the following corollary of Lemma~\ref{MAXPROP}.

\begin{corollary}
\label{COR1}
Let $c_1,\dots,c_n$ and $M$ be positive integers. There exists a subset $I\subseteq [n]$ such that $\sum_{i\in I}c_i=M$ if and only if the optimal objective value of the problem 
\begin{equation}\label{maximization2}\max\limits_{{x}\in\{0,1\}^n}\left(\sum_ {i=1}^n c_ix_i\right)\prod_ {i=1}^ne^{-\frac{ c_i}{M} x_i}
\end{equation}
is greater than $\max\{(M-1)e^{-1+1/M},(M+1)e^{-1-1/M}\}$. Equivalently,~\eqref{maximization2} has an optimal objective value of $M e^{-1}$ if and only if~\eqref{maximization} has an optimal objective \[
\max_{x\in\{0,1\}^n}z(x)>\max\left\{\ln(M-1)-\frac{M-1}{M},\ln(M+1)-\frac{M+1}{M}\right\}.
\] 
\end{corollary}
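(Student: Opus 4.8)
The plan is to obtain the corollary from Lemma~\ref{MAXPROP} and Lemma~\ref{CNVXLEM} together with two elementary facts: the strict monotonicity of the exponential map, and the integrality of the $c_i$. Write $g(x)=\ln\bigl(\sum_{i=1}^n c_i x_i\bigr)-\tfrac1M\sum_{i=1}^n c_i x_i$ for the objective of \eqref{maximization}; this is exactly the transformed objective $z(x)$ of \eqref{LOGTRANS} corresponding to the choice $p_i=e^{-c_i/M}$. For every $x\in\{0,1\}^n$ the objective of \eqref{maximization2} equals $e^{g(x)}$, with the convention that the all-zero point (where $\sum c_i x_i=0$) attains the value $0=e^{-\infty}$. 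Since $t\mapsto e^t$ is strictly increasing, \eqref{maximization} and \eqref{maximization2} have the same optimal solutions, and if $v^\ast:=\max_{x\in\{0,1\}^n} g(x)$ then the optimal value of \eqref{maximization2} is $e^{v^\ast}$. In particular \eqref{maximization2} has optimal value $Me^{-1}$ iff $v^\ast=\ln M-1$, so Lemma~\ref{MAXPROP} already yields: a subset with $\sum_{i\in I}c_i=M$ exists iff \eqref{maximization2} has optimal value $Me^{-1}$.

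What remains is the equivalence $v^\ast=\ln M-1\iff v^\ast>\max\{f(M-1),f(M+1)\}$; applying $e^{(\cdot)}$ to the inequality on the right then turns it into the threshold $\max\{(M-1)e^{-1+1/M},(M+1)e^{-1-1/M}\}$ of the corollary, using $e^{f(k)}=k\,e^{-k/M}$, while $\max\{f(M-1),f(M+1)\}$ is precisely the right-hand side appearing in the ``equivalently'' clause. I would argue this equivalence as follows. Because the $c_i$ are positive integers, $\sum_{i=1}^n c_i x_i$ ranges over nonnegative integers; the value $0$ gives objective $-\infty$ and can be discarded, so $v^\ast=\max\{f(N): N\ \text{a positive integer of the form}\ \textstyle\sum_{i\in I}c_i\}$. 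By Lemma~\ref{CNVXLEM}, $f$ is concave, has its unique maximum at $M$ with $f(M)=\ln M-1$, and $f(M)-f(N)\geq \tfrac1{5M^2}>0$ for every positive integer $N\neq M$. Hence if $v^\ast=f(M)$ then this strict gap gives $v^\ast>f(M-1)$ and $v^\ast>f(M+1)$; and conversely, if $v^\ast>\max\{f(M-1),f(M+1)\}$, writing $v^\ast=f(N)$ for the achievable integer $N$ attaining the maximum, concavity forces $f(N)\leq f(M-1)$ when $N\leq M-1$ and $f(N)\leq f(M+1)$ when $N\geq M+1$, so $N$ equals neither, i.e.\ $N=M$ and $v^\ast=\ln M-1$.

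Chaining the three equivalences --- Lemma~\ref{MAXPROP}, the argument just sketched, and the monotonicity of $e^{(\cdot)}$ (which transports $v^\ast=\ln M-1$ to ``\eqref{maximization2} has optimal value $Me^{-1}$'' and the inequality $v^\ast>\max\{f(M-1),f(M+1)\}$ to the stated threshold on the optimal value of \eqref{maximization2}) --- gives both assertions of the corollary at once. I do not expect a genuine obstacle; the only point requiring a little care is the boundary case $M=1$, where Lemma~\ref{CNVXLEM} is stated only for $M>1$ and $f(M-1)=f(0)$ is undefined. There one notes that $f''(y)=-1/y^2<0$ still makes $f$ concave with unique maximum at $M$, that every achievable $N$ satisfies $N\geq 1=M$ so the comparison against $f(M-1)$ is vacuous, and that reading $\ln(M-1)=-\infty$ --- equivalently $(M-1)e^{-1+1/M}=0$ --- keeps both displayed thresholds correct; all other steps go through verbatim.
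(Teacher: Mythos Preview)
Your argument is correct and follows exactly the route the paper indicates: the corollary is stated there without a formal proof, only the remark that it follows from the log transformation \eqref{LOGTRANS} together with Lemma~\ref{MAXPROP} and the integrality of the $c_i$, which is precisely what you flesh out (with Lemma~\ref{CNVXLEM} supplying the concavity and the strict gap at $M$). Your treatment of the boundary case $M=1$ is a sensible addition that the paper omits.
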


To prove that~\eqref{FORM3} is NP-hard it has to be shown that the reduction is polynomial time. However, as the input parameters $e^{-c_i/M}$ for each $i\in [n]$ cannot be exactly represented using a polynomial number of bits we employ a simple rounding argument. 

For a given $K>0$ let 
\begin{align*}
\hat p_i = \frac{\lfloor K e^{-\frac{c_i}{M}} \rfloor}{K} , && \text{and} &&
\hat z(x)=\ln(\sum_{i=1}^n c_ix_i)+\sum_{i=1}^n\ln \hat p_i x_i.
\end{align*} Observe that $p_i-\frac{1}{K} \leq \hat p_i \leq p_i$.  
The following Lemma establishes the existence of a $K$ that is polynomial in the input size for which the maximizers of $\hat z$ and $z$ coincide.

\begin{lemma}
\label{LEM1}
For $c_{\text{max}}< M$, there exists a positive $K \in O ( n  M^2)$ that satisfies for all 
$x^*$ that are optimal for~\eqref{FORM3}, 
\[
z(x^*)-\max_{x\in\{0,1\}^n}\hat z(x) < \ln M - \max\left\{\ln(M-1)+\frac{1}{M}, \ln(M+1)-\frac{1}{M}\right\}.
\]  In particular, this inequality holds for any $K > \frac{5nM^2}{1-1/(10nM^2)}$.
\end{lemma}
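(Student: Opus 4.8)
The plan is to use that rounding $p_i$ down to $\hat p_i$ only decreases the objective, so that $\hat z(x)\le z(x)$ for every $x\in\{0,1\}^n$; in particular $\max_x\hat z(x)\le z(x^*)$ while $\max_x\hat z(x)\ge\hat z(x^*)$, so the left-hand side of the claimed inequality is nonnegative and at most
\[
z(x^*)-\hat z(x^*)=\sum_{i:\,x_i^*=1}\bigl(\ln p_i-\ln\hat p_i\bigr).
\]
Since $z(x^*)$ is the common optimal value of $z$, this upper bound — and hence the whole argument — does not depend on which optimal $x^*$ is chosen, so the quantifier ``for all $x^*$'' is handled for free. I also record the preliminary observation that $c_{\max}<M$ forces $p_i=e^{-c_i/M}>e^{-1}>\tfrac13$, so $\lfloor Kp_i\rfloor\ge1$ and $\hat p_i\ge p_i-1/K>0$ as soon as $K\ge3$; this keeps $\hat z$ finite and the rounded instance of polynomial size. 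Thus it suffices to make the displayed sum smaller than the threshold.

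First I would estimate a single term: from $\hat p_i\ge p_i-1/K$ and concavity of $\ln$ (equivalently $-\ln(1-t)\le t/(1-t)$ with $t=1/(Kp_i)$) one gets $\ln p_i-\ln\hat p_i\le 1/(Kp_i-1)$, and since $p_i>e^{-1}$ this is bounded uniformly in $i$ by a quantity of order $1/K$; summing over the at most $n$ active coordinates yields a bound of the form $z(x^*)-\max_x\hat z(x)<\frac{c'n}{K-c''}$ for absolute constants $c',c''$. Next I would rewrite the threshold: with $f(y)=\ln y-y/M$ as in Lemma~\ref{CNVXLEM} one checks $f(M)=\ln M-1$ and $f(M\pm1)=\ln(M\pm1)-(M\pm1)/M$, so that
\[
\ln M-\max\bigl\{\ln(M-1)+\tfrac1M,\ \ln(M+1)-\tfrac1M\bigr\}=f(M)-\max\{f(M-1),\,f(M+1)\},
\]
which by Lemma~\ref{CNVXLEM} (applied with $N=M-1$ and $N=M+1$) is at least $\frac1{5M^2}$. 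Combining the two bounds, the claimed inequality holds once $\frac{c'n}{K-c''}<\frac1{5M^2}$, i.e.\ for all $K$ above a threshold of order $nM^2$. To obtain the sharp constant in the statement I would tighten the term estimate using that an optimal $x^*$ satisfies $\sum_{i:\,x_i^*=1}c_i<2M$ (otherwise removing one active item would bring $\sum c_ix_i$ strictly closer to $M$ while staying $\ge M$, contradicting optimality since $f$ is strictly decreasing on $[M,\infty)$), hence $\prod_{i:\,x_i^*=1}p_i>e^{-2}$; feeding this, the convexity bound $e^{c_i/M}\le 1+(e-1)c_i/M$, and the Weierstrass product inequality into the estimate of $\sum_i(\ln p_i-\ln\hat p_i)=\ln\bigl(\prod p_i/\prod\hat p_i\bigr)$ produces a bound that is set below $\frac1{5M^2}$ exactly when $K>\frac{5nM^2}{1-1/(10nM^2)}$.

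The conceptual content — pointwise domination $\hat z\le z$, and recognizing the threshold as the concavity gap of $f$ controlled by Lemma~\ref{CNVXLEM} — is short; the main work is the quantitative bookkeeping in the two estimates above, in particular obtaining a per-coordinate logarithmic loss that is $O(1/K)$ uniformly in $i$ from the lower bound on $p_i$, and carefully collecting the lower-order terms so that the resulting threshold on $K$ lands within the stated $O(nM^2)$ bound. A secondary point, already noted, is the routine check that for such $K$ every $\hat p_i$ is strictly positive and representable with polynomially many bits, so that the construction indeed yields a legitimate, polynomially sized instance of~\eqref{FORM3}.
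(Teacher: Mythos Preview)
Your approach is essentially the paper's: bound $z(x^*)-\max_x\hat z(x)\le z(x^*)-\hat z(x^*)=\sum_{i:x_i^*=1}(\ln p_i-\ln\hat p_i)$, control each summand by $-\ln(1-1/(Kp_i))$ using $\hat p_i\ge p_i-1/K$ and the uniform lower bound $p_i\ge e^{-c_{\max}/M}$, and compare the resulting $O(n/K)$ bound to the threshold, which you correctly identify as $f(M)-\max\{f(M-1),f(M+1)\}\ge 1/(5M^2)$ via Lemma~\ref{CNVXLEM}. The one place you diverge is in extracting the explicit constant: the paper simply solves $-n\ln\!\bigl(1-e^{c_{\max}/M}/K\bigr)<1/(5M^2)$ for $K$ and Taylor-expands $1-e^{-1/(5nM^2)}\ge \tfrac{1}{5nM^2}\bigl(1-\tfrac{1}{10nM^2}\bigr)$ to land on $K>\frac{5nM^2}{1-1/(10nM^2)}$, whereas your proposed detour through $\sum_{i:x_i^*=1}c_i<2M$, the convexity bound on $e^{c_i/M}$, and the Weierstrass inequality is unnecessary and, as sketched, does not visibly produce exactly that constant---drop it and follow the direct Taylor route.
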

\begin{proof} Consider an $x\in\{0,1\}^n$. Then,
\begin{align*}
 z(x)-\hat z(x)&=\ln\left(\sum_{i=1}^nc_ix_i\right)+\sum_{i=1}^{n}\ln p_i x_i -\ln\left(\sum_{i=1}^nc_ix_i\right)-\sum_{i=1}^{n}\ln\hat p_ix_i\\
 & \leq \sum_{i=1}^n \ln p_i - \ln (p_i - 1/K)=\sum_{i=1}^n \ln \left(\frac{p_i}{p_i-1/K}\right)\\
 & = \sum_{i=1}^n\ln\left(\frac{1}{1-1/(Kp_i)}\right) \leq -n\ln\left(1-\frac{1}{Ke^{-c_{max}/M}}\right).
\end{align*} 
Since this upper bound holds for every $x\in\{0,1\}^n$ it also applies to the maxima of~\eqref{FORM3}, and by Lemma~\ref{CNVXLEM} it suffices to choose $K$ so that
\begin{align*}
& -n\ln\left(1-\frac{1}{Ke^{-c_{max}/M}}\right) &&<&&  \frac{1}{5M^2} && \leq \min_{L\neq M}\{f(M)-f(L)\} \\
&\Leftrightarrow 
 1-\frac{e^{c_{max}/M}}{K} &&>&& e^{-\frac{1}{5nM^2}}
&&\Leftrightarrow K > \frac{e^{c_{max}/M}}{1-e^{-\frac{1}{5nM^2}}}>\frac{e}{1-e^{-\frac{1}{5nM^2}}}. 
\end{align*} By a Taylor series expansion of the denominator it follows that any  
$K>\frac{5nM^2}{1-1/(10nM^2)}$ 
is sufficiently large. 
\end{proof}

\begin{proposition}
The all-or nothing subset problem~\eqref{FORM3} is NP-hard. 
\end{proposition}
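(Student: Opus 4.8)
The plan is to give a polynomial-time reduction from subset sum to problem~\eqref{FORM3}, combining Corollary~\ref{COR1} with Lemma~\ref{LEM1}. Given a subset sum instance consisting of positive integers $c_1,\dots,c_n$ and a target $M$, I would first clear away the degenerate cases: if $M=1$ the instance is a yes-instance precisely when some $c_i=1$; any item with $c_i>M$ cannot lie in a subset summing to $M$ and is discarded; and an item with $c_i=M$ at once certifies a yes-instance. If no items survive the discarding step the instance is trivially negative; otherwise we may from now on assume $M\geq 2$ and $c_{\text{max}}<M$, so that the hypotheses of Lemma~\ref{CNVXLEM} and Lemma~\ref{LEM1} are met.

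Next I would build the all-or-nothing instance. Let $K$ be the least integer exceeding $\frac{5nM^2}{1-1/(10nM^2)}$, so $K\in O(nM^2)$ and $\log K$ is polynomial in the input size; take the profits to be the given integers $c_1,\dots,c_n$ and the success probabilities to be the rationals $\hat p_i=\lfloor Ke^{-c_i/M}\rfloor/K$. Since $c_i<M$ forces $e^{c_i/M}<e<K$, we get $0<\hat p_i\leq p_i=e^{-c_i/M}<1$, so the standing assumption $0<p_i<1$ holds, and each $\hat p_i$ is a rational with numerator and denominator bounded by $K$, hence of polynomial bit-size. The one genuinely delicate point — the step I expect to need the most care — is that $\hat p_i$ is computable in polynomial time: one approximates $e^{-c_i/M}$ to sufficiently many bits, using that $c_i/M$ is a nonzero rational, so $Ke^{-c_i/M}$ is transcendental and in particular not an integer, whence a good enough rational approximation pins down the floor exactly; everything else is bookkeeping.

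For correctness I would pass to the logarithmic form~\eqref{LOGTRANS}, writing $z(x)=f\!\left(\sum_i c_ix_i\right)$ and $\hat z(x)=\ln\!\left(\sum_i c_ix_i\right)+\sum_i(\ln\hat p_i)x_i$, and noting $\hat z\leq z$ pointwise with $\hat z(x)<z(x)$ strictly whenever $x\neq 0$ (again since $Ke^{-c_i/M}\notin\nat$). By Corollary~\ref{COR1} the subset sum instance is a yes-instance iff $\max_x z(x)=f(M)$, and otherwise $\max_x z(x)\leq\max\{f(M-1),f(M+1)\}$. The right-hand side of Lemma~\ref{LEM1} equals $f(M)-\max\{f(M-1),f(M+1)\}$, so, since $z(x^*)=\max_x z(x)$ for $x^*$ optimal under the true probabilities, Lemma~\ref{LEM1} gives $\max_x\hat z(x)>\max\{f(M-1),f(M+1)\}$ in the yes case, whereas $\max_x\hat z(x)\leq\max_x z(x)\leq\max\{f(M-1),f(M+1)\}$ in the negative case. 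Finally I would observe that one can read off the answer from an optimal solution $x^*$ of~\eqref{FORM3} for the constructed instance with no exact arithmetic on irrationals: if $N:=\sum_i c_ix^*_i$ were a positive integer other than $M$, then $\hat z(x^*)<z(x^*)=f(N)\leq\max\{f(M-1),f(M+1)\}$ by concavity of $f$, contradicting the yes-case bound, so the instance is a yes-instance exactly when the returned optimal subset sums to $M$ — a polynomial-time check. (Should only the optimal value of~\eqref{FORM3} be available, one instead compares it to $\max\{(M-1)e^{-1+1/M},(M+1)e^{-1-1/M}\}$ to precision $O(1/M)$, which suffices because the two cases are separated by a gap of that order.) Thus a polynomial-time algorithm for~\eqref{FORM3} would decide subset sum in polynomial time, and NP-hardness follows.
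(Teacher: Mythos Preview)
Your proof is correct and follows essentially the same route as the paper: reduce from subset sum by setting $p_i=e^{-c_i/M}$, round to $\hat p_i=\lfloor Kp_i\rfloor/K$ with $K\in O(nM^2)$ as supplied by Lemma~\ref{LEM1}, and use Corollary~\ref{COR1} together with the gap from Lemma~\ref{CNVXLEM} to separate yes- from no-instances. If anything you are more careful than the paper on points it glosses over --- the degenerate cases needed to secure $c_{\max}<M$, the polynomial-time computability of $\lfloor Ke^{-c_i/M}\rfloor$ via transcendence of $e^{-c_i/M}$, and the decision step of simply checking whether the returned optimal subset sums to $M$.
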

\begin{proof} We prove our claim by providing a reduction of the subset sum problem with positive integer inputs, which is known to be NP-hard. Consider an instance where the goal is to decide whether there exists a subset of $\{c_1,\dots,c_n\}\subset\nat$ that sums to $M\in\nat$. Without losing of generality it is assumed that $c_{\text{max}}\leq M$. For each $i\in[n]$, let $p_i = e^{-\frac{c_i}{M}}$. Let $K\in O(nM^2)$ be an integer satisfying the condition of Lemma~\ref{LEM1} (which also states that it suffices to choose $K=6nM^2$). Set $\hat p_i = \lfloor p_i K\rfloor / K\geq  e^{-\frac{c_i}{M}}-\frac{1}{K}$ for each $i\in[n]$.  
Following Corollary~\ref{COR1} the subset sum problem has a feasible solution if and only if 
\[
\max_{x\in\{0,1\}^n} z(x) = \ln M - 1 > \max\left\{\ln(M-1)-\frac{M-1}{M},\ln(M+1)-\frac{M+1}{M}\right\}.
\] 
By the choice of $K$ and Lemma~\ref{LEM1} it follows that 
\[
\max_{x\in\{0,1\}^n} z(x) - \max_{\{0,1\}^n}\hat z(x) 
< \ln M - \max\left\{\ln(M-1)-\frac{1}{M}, \ln(M+1)+\frac{1}{M}\right\}\] 
Then it follows there exists an $x\in\{0,1\}^n$ such that $\sum_{i=1}^nc_ix_i=M$ if an only if \[
\max_{x\in\{0,1\}^n}\hat z(x)>\max\left\{\ln(M-1)-\frac{M-1}{M},\ln(M+1)-\frac{M+1}{M}\right\}. 
\] Since $K\in O(nM^2)$ it follows that the reduction of subset sum is polynomial in $n$, $\ln M$ and $\ln c_{max}$.
\end{proof}

\section{Approximation of Maximum Expected Value All-or-Nothing Subset}

We now develop an FPTAS for our nonlinear unconstrained problem~\eqref{FORM3}. To this end we first consider a pseudo-polynomial time algorithm. This analysis is similar to that of a related constrained linear problem, namely the knapsack problem;  see~\cite{Ibarra75,Gens80,vazirani01}. A fundamental difference is that~\eqref{FORM3}  unconstrained. 

\subsection{A Pseudo-polynomial Dynamic Program}
For $i\in [n]$ let $P(i,C)$ denote the maximum probability of a subset of $[i]$ with a profit of exactly $C$. 
Consider the dynamic program (DP) given by the equations
\begin{align}
\label{DPEQNS}
P(i,C)=\begin{cases}
\max\{P(i-1,C),  p_i \cdot P(i-1,C-c_i)\} & i\geq 2, c_i < C\\
P(i-1,C) & i\geq 2, c_i\geq C\\
p_1 & i=1 \text { and } c_{1}=C\\
1 & C=0\\
0 & \text{ otherwise.} 
\end{cases}
\end{align}

Let $\bar C$ denote an upper bound on the sum of profits of an item set that is optimal for~\eqref{FORM3}. A straightforward upper bound is $\bar C = \sum_{i=1}^n c_i$. 

Then, the problem of determining $x\in\{0,1\}^n$ that maximizes 
~\eqref{FORM3} is solved by determining 
\begin{equation}
\label{DPSOL}
\max_C\set{C\cdot P(n,C)}{C = \min_{i\in[n]}\{c_i\},\min_{i\in[n]}\{c_i\}+1,\ldots, \bar C}.
\end{equation}
The total running time of this algorithm that determines an optimum of~\eqref{FORM3} through~\eqref{DPSOL} is $O(n\bar C)$. In the following let $x^*\in\{0,1\}^n$ be an optimal solution for~\eqref{FORM3} with support $S^*=\set{i\in[n]}{x^*_i=1}$, and let $C^*=\sum_{i\in S^*}c_i$ denote the corresponding maximizer of~\eqref{DPSOL}.  

\noindent The next lemma establishes a lower bound on the probabilities of items that are included in an optimal solution.

\begin{lemma}
	\label{LEMHALF} Suppose $S^*$ is (the support of a solution that is) optimal for~\eqref{FORM3} with $\card{S^*}\geq 2$, and $l\in\argmin_{i\in S^*}\{p_i\}$.
If $p_l<\frac{1}{2}$ then $\prod_{i\in S^*\setminus\{l\}}p_i\geq\frac{1}{2}$. 
\end{lemma}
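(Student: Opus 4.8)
The plan is to exploit the optimality of $S^*$ through two ``local'' comparisons: deleting the low-probability element $l$, and collapsing down to the singleton $\{l\}$. For a subset $S\subseteq[n]$ write $V(S)=\left(\sum_{i\in S}c_i\right)\prod_{j\in S}p_j$ for the objective value in~\eqref{FORM3}, and set $A=\sum_{i\in S^*\setminus\{l\}}c_i$ and $Q=\prod_{i\in S^*\setminus\{l\}}p_i$. Note $A>0$ and $Q>0$ because $S^*\setminus\{l\}$ is nonempty (as $\card{S^*}\geq 2$) and, by Assumption~1, all $c_i>0$ and $p_i>0$. In this notation $V(S^*)=(c_l+A)\,p_l\,Q$, while $V(S^*\setminus\{l\})=A\,Q$ and $V(\{l\})=c_l\,p_l$.

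First I would use $V(S^*)\geq V(S^*\setminus\{l\})$: dividing by $Q>0$ gives $(c_l+A)p_l\geq A$, i.e. $c_l p_l\geq A(1-p_l)$, so that $A\leq \dfrac{c_l p_l}{1-p_l}$. Since $p_l<\tfrac12$ we have $\dfrac{p_l}{1-p_l}<1$, and therefore $A<c_l$. Next I would use $V(S^*)\geq V(\{l\})$ (valid since $\{l\}$ is a feasible subset and $S^*$ is optimal): dividing by $p_l>0$ gives $(c_l+A)\,Q\geq c_l$, hence $Q\geq \dfrac{c_l}{c_l+A}$. Combining the two bounds, $c_l+A<2c_l$, so $Q\geq \dfrac{c_l}{c_l+A}>\dfrac{c_l}{2c_l}=\dfrac12$, which is the claim.

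The only subtlety — and the step I expect to be the crux — is recognizing that a single exchange inequality (removing $l$) bounds $A$ but says nothing directly about the product $Q$, so one must bring in the second comparison against the singleton $\{l\}$ to convert the bound on $A$ into a lower bound on $Q$. Everything else is elementary algebra. I note that the hypothesis $l\in\argmin_{i\in S^*}\{p_i\}$ is not actually needed for this argument: only $l\in S^*$ with $p_l<\tfrac12$ and $\card{S^*}\geq 2$ are used; the argmin condition matters only for how the lemma is subsequently applied.
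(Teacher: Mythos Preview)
Your proof is correct and uses essentially the same approach as the paper: both arguments hinge on comparing $V(S^*)$ against $V(S^*\setminus\{l\})$ and $V(\{l\})$. The only cosmetic difference is that the paper packages these two comparisons into a single contradiction (writing $V(S^*)=p_l X + Q\,(p_l c_l)<\max\{X,p_lc_l\}$ when both $p_l<\tfrac12$ and $Q<\tfrac12$), whereas you chain them directly to obtain $A<c_l$ and then $Q>\tfrac12$.
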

\begin{proof}
	Assume for the sake of deriving a contradiction that there exists an $l\in S^*$ with $p_l  <  \frac{1}{2}$ and $\prod_{i\in S^*\setminus \{l\}}p_i < \frac{1}{2}$. Let $X=\prod_{i\in  S^*\setminus\{l\}}p_i\sum_{i\in \hat S^*\setminus\{l\}}c_i$. Then, \[
	\prod_{i\in S^*}p_i\sum_{i\in S^*}c_i=p_l X + p_lc_l\prod_{i\in S^*\setminus \{l\}}p_i < \max\{X,p_lc_l\},
	\] thereby establishing a contradiction with the optimality of $S^*$.
\end{proof}

\noindent In particular Lemma~\ref{LEMHALF} implies the following corollary.

\begin{corollary}\label{ONEHALFCOR}
Suppose $S^*$ is (the support of a solution that is) optimal for~\eqref{FORM3}. Then \[\card{\set{i\in S^*}{p_i < \frac{1}{2}}}\leq 1.
\]
\end{corollary}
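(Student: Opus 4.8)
The plan is to derive this as an immediate consequence of Lemma~\ref{LEMHALF} by a short argument by contradiction. Suppose for contradiction that $S^*$ contains two distinct indices $l, m$ with $p_l < \tfrac12$ and $p_m < \tfrac12$. In particular $\card{S^*}\geq 2$, so Lemma~\ref{LEMHALF} applies: taking $l\in\argmin_{i\in S^*}\{p_i\}$ (note that since some index has probability below $\tfrac12$, the minimizer does too, so the hypothesis $p_l<\tfrac12$ of the lemma is met), we conclude $\prod_{i\in S^*\setminus\{l\}}p_i \geq \tfrac12$.

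The contradiction then comes from bounding that same product from above. Since $m\in S^*\setminus\{l\}$ (as $m\neq l$), and every factor $p_i$ lies in $(0,1)$ by the standing Assumption, we have
\[
\prod_{i\in S^*\setminus\{l\}}p_i \;\leq\; p_m \;<\; \tfrac12,
\]
which contradicts the lower bound just obtained. Hence at most one index of $S^*$ can have success probability below $\tfrac12$, i.e. $\card{\set{i\in S^*}{p_i<\tfrac12}}\leq 1$.

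I do not anticipate a real obstacle here; the only point requiring a moment's care is checking that the hypotheses of Lemma~\ref{LEMHALF} are genuinely in force — specifically that $\card{S^*}\geq 2$ (guaranteed once we assume two low-probability indices exist) and that the minimum-probability element $l$ indeed satisfies $p_l<\tfrac12$ (guaranteed since the minimum over $S^*$ is at most $p_m<\tfrac12$). One should also note the degenerate cases $\card{S^*}\leq 1$ trivially satisfy the conclusion, so the corollary holds in all cases.
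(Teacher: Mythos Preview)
Your proof is correct and is precisely the intended argument: the paper gives no explicit proof for this corollary, merely noting that Lemma~\ref{LEMHALF} implies it, and your contradiction argument is the natural (and essentially unique) way to extract that implication. The only minor quibble is notational---you first name the two low-probability indices $l,m$ and then re-use $l$ for the minimizer---but this is easily fixed by saying ``without loss of generality assume $p_l\le p_m$, so $l\in\argmin_{i\in S^*}p_i$,'' and does not affect the substance.
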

\noindent The result of this corollary is instrumental for developing an FPTAS that is the subject of the next section.

\subsection{A Fully Polynomial Time Approximation Scheme}

In order to approximately solve DP~\eqref{DPSOL} and with a polynomial run-time complexity bound we consider scaling down (and rounding) the profit coefficients whose magnitude determines the running time of~\eqref{DPSOL}. In particular consider scaling the profit coefficients using some factor $\kappa > 0$.  Accordingly, for each $i\in[n]$, $\hat c_i = \lfloor\frac{c_i}{\kappa}\rfloor$ is the scaled profit coefficient. In the following let $N_{1/2}=\set{i\in[n]}{p_i\geq \frac{1}{2}}$.  Further, for convenience assume that $[n]\setminus N_{1/2}=[h]$ for some $h\in [n]\cup \{0\}$ ($h=0$ when $[n]\setminus N_{1/2}=\emptyset$). Accordingly, $N_{1/2}=\{h+1,\ldots,n\}$. For $i\in [n]$ let $\hat P(i,C)$ denote the DP equations~\eqref{DPEQNS} with the $\hat c_i$ values in place of the $c_i$'s. Also let $\hat c_{n+1}=0$ and $p_{n+1}=1$. 
Then, for $i\in [n]$ and $j>i$ the scaled DP problem is defined as
\begin{equation}
\label{SCALEDDPSOL}
\hat z(i,j)=\max_C\set{(C+\hat c_j)\cdot \hat P(i,C)\cdot p_j}{C = \min_{k\in[i]}\{\hat c_k\},\min_{k\in[i]}\{\hat c_k\}+1, \ldots, \bar C(i)}, 
\end{equation}
where $\bar C(i) = \sum_{k=1}^i \hat c_k$. 

Note that $\hat z(i,j)$ is an optimal objective value of~\eqref{FORM3} with $c$ replaced by $\hat c$ and the additional constraints (fixing the decision variable values) for $k\in \{i+1,\ldots,n\}$
\[
x_k = 
\begin{cases}
 0 & k\in \{i+1,\ldots,n\}\setminus \{j\}\\
 1 & k = j.	
\end{cases}.
\]
Let $\hat C\equiv \hat z(n,n+1)=max_C\set{C\cdot\hat P(n,c)}{C=1,\ldots,\bar C(n)}$ and let $\hat S$ be the corresponding support of $x$ that maximizes~\eqref{FORM3} with $c$ replaced by $\hat c$ (for which $\sum_{i\in\hat S}\hat c_i=\hat C$). 
Following Corollary~\ref{ONEHALFCOR}, it can observed that it suffices to evaluate $\hat z(i,j)$ with $h=i < j = h+1,\ldots,n+1$ to determine $\hat z(n,n+1)$ and $\hat C \in\argmax_C\set{C\cdot\hat P(n,c)}{C=1,\ldots,\bar C(n)}$.

The following lemma establishes an upper bound on $\kappa$ that is sufficient to bound the relative error to within a given $\epsilon>0$.
\begin{lemma}\label{FPTASLEM}
For a given $\epsilon > 0$, and all $\kappa \leq \frac{\epsilon \max_{i\in S^*}p_ic_i}{n}$, 
\begin{equation}
\label{APPROXINEQ}
  \kappa\cdot \hat z(n,n+1) 
  \geq (1-\epsilon) \cdot C^*\cdot P(n,C^*),
\end{equation} where $C^*$ is a maximizer of~\eqref{DPSOL}.
\end{lemma}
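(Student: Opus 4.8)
The plan is to reduce the claim about the scaled dynamic program to a statement about a single, fixed subset --- namely the optimal support $S^*$ --- and then to track how much profit is lost to rounding on that subset. First I would observe that, by the remark following the definition of $\hat z(i,j)$ and by Corollary~\ref{ONEHALFCOR}, it suffices to exhibit one feasible configuration of the scaled problem whose scaled objective value is at least $(1-\epsilon) C^* P(n,C^*)$; since $\hat z(n,n+1)$ is the maximum over all configurations, this lower bound then transfers to it. The natural candidate is to take the set $S^*$ itself: writing $\hat C^* = \sum_{i\in S^*}\hat c_i$, the point is that $\hat P(n,\hat C^*)\geq \prod_{i\in S^*}p_i = P(n,C^*)$, because $S^*$ is one of the sets the scaled DP considers at budget level $\hat C^*$ (the probabilities are unchanged by scaling, only the profit coefficients are). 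Hence $\kappa\cdot\hat z(n,n+1)\geq \kappa\cdot \hat C^*\cdot \hat P(n,\hat C^*)\geq \kappa\,\hat C^*\,P(n,C^*)$, and the whole problem is reduced to showing $\kappa\,\hat C^*\geq (1-\epsilon)\,C^*$.

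Next I would control $\kappa\,\hat C^*$ from below. From $\hat c_i = \lfloor c_i/\kappa\rfloor$ we get $\kappa\,\hat c_i > c_i - \kappa$ for each $i$, hence
\[
\kappa\,\hat C^* \;=\; \kappa\sum_{i\in S^*}\hat c_i \;>\; \sum_{i\in S^*}(c_i-\kappa) \;=\; C^* - \kappa\,\card{S^*} \;\geq\; C^* - \kappa\,n.
\]
So it remains to argue that $\kappa\,n \leq \epsilon\,C^*$. This is where the hypothesis $\kappa \leq \epsilon\max_{i\in S^*}p_i c_i / n$ and the all-or-nothing structure enter: $\kappa\,n \leq \epsilon\max_{i\in S^*}p_i c_i$, so it is enough to show $\max_{i\in S^*}p_i c_i \leq C^*$, i.e.\ that the best single scaled item's expected profit does not exceed the expected profit $C^*\prod_{i\in S^*}p_i$... wait, one must be careful: $C^* = \sum_{i\in S^*}c_i$, not $C^* P(n,C^*)$. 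I would instead compare with $C^*$ directly: for any $j\in S^*$, $p_j c_j \leq c_j \leq \sum_{i\in S^*}c_i = C^*$ since $p_j<1$. Therefore $\kappa\,n \leq \epsilon\,p_j c_j \leq \epsilon\,C^*$, giving $\kappa\,\hat C^* > (1-\epsilon)C^*$, and combining with the previous paragraph yields $\kappa\,\hat z(n,n+1)\geq \kappa\,\hat C^*\,P(n,C^*) > (1-\epsilon)\,C^*\,P(n,C^*)$, as required.

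The main obstacle, and the place to be most careful, is the bookkeeping in the first step: one must verify that the pair $(i,j)=(n,n+1)$ in~\eqref{SCALEDDPSOL} genuinely corresponds to ``no item forced, none excluded'' (recall $\hat c_{n+1}=0$, $p_{n+1}=1$), so that $\hat z(n,n+1)$ really is the unconstrained scaled optimum $\hat C = \max_C\{C\cdot\hat P(n,C)\}$, and that $S^*$ is an admissible subset there --- in particular that its scaled profit sum $\hat C^*$ falls within the range $\{\min_{k}\hat c_k,\ldots,\bar C(n)\}$ over which the maximum in~\eqref{DPSOL}'s scaled analogue is taken, which holds because $0\le\hat c_i\le c_i$ and $\bar C(n)=\sum_k\hat c_k\ge\hat C^*$. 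A secondary subtlety is the degenerate case $\card{S^*}\le 1$ (so Corollary~\ref{ONEHALFCOR} is vacuous and the $\max_{i\in S^*}p_i c_i$ term is over a singleton or empty set): if $S^*=\emptyset$ the inequality is trivial, and if $\card{S^*}=1$ the argument above still goes through verbatim with $n$ replaced by $1$ in the count $\card{S^*}$. Everything else is the routine $\lfloor\cdot\rfloor$ estimate already used in Lemma~\ref{LEM1}.
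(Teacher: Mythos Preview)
Your argument is correct and follows essentially the same route as the paper: both proofs first use optimality of the scaled problem to pass from $\hat z(n,n+1)$ to the value of $S^*$ under $\hat c$, then bound the rounding loss by $\sum_{i\in S^*}c_i-\kappa\sum_{i\in S^*}\hat c_i\le n\kappa$, and finally verify $n\kappa\le\epsilon\,C^*$. The only cosmetic difference is in the last step: the paper deduces $\max_i p_ic_i\le C^*P(n,C^*)\le C^*$ from the optimality of $S^*$ against singletons, whereas you use the more elementary $p_jc_j\le c_j\le C^*$ for $j\in S^*$; your invocation of Corollary~\ref{ONEHALFCOR} in the first paragraph is unnecessary here.
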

\begin{proof}
First note that 
\begin{equation*}
 \sum_{i\in S^*}c_i-\kappa\sum_{i\in S^*}\hat c_i\leq n\kappa.
\end{equation*}
Then, it follows that 
\begin{align*}
  \kappa\cdot \hat C \cdot \hat P(n,\hat C) =\kappa\sum_{i\in \hat S}\hat c_i\prod_{j\in \hat S}p_j 
& \geq \left(1-\frac{n\kappa}{\sum_{i\in  S^*}c_i}\right)\sum_{i\in S^*}c_i\prod_{j\in  S^*}p_j,
\end{align*} where the last inequality also followed from the optimality of $\hat S$ with the scaled profit $\hat c_i$ values. Then, given an $\epsilon>0$,~\eqref{APPROXINEQ} implies that $\kappa$ must satisfy
\begin{equation*}
\frac{n\kappa}{\sum_{i\in S^*}c_i}\leq \epsilon\Leftrightarrow \kappa\leq \frac{\epsilon\sum_{i\in S^*}c_i}{n},
\end{equation*}
and so it suffices to choose
\[
\kappa\leq \frac{\epsilon \max_{i\in[n]}\{p_ic_i\}}{n}\leq \frac{\epsilon\sum_{i\in S^*}c_i\prod_{j\in  S^*}p_j}{n} \leq \frac{\epsilon\sum_{i\in  S^*}c_i}{n}.
\]
\end{proof}

 Algorithm~\ref{FPTAS} is now considered as an approximation scheme for~\eqref{DPSOL} and (the equivalent)~\eqref{FORM3}.

\begin{algorithm}
\caption{\label{FPTAS}}
\begin{algorithmic}[1]
	\REQUIRE{$\epsilon,c,p$}
	\STATE $\kappa\leftarrow \frac{\epsilon \max_{i\in N_{1/2}}\{p_ic_i\}}{n}$
	\FOR{$j=h+1,\ldots,n+1$}\label{BEGINLOOP}
\STATE $z\leftarrow \max\{\kappa\cdot \hat z(h,j), c_jp_j \} $\label{SINGLEITEMSTEP}
\IF{$z_{\text{max}} < z$}
\STATE $z_{\text{max}} \leftarrow z$
\ENDIF
\ENDFOR\label{ENDLOOP}
\ENSURE{$z_{\text{max}}$}
\end{algorithmic}
\end{algorithm}

\noindent The following proposition establishes that Algorithm~\ref{FPTAS} is an FPTAS for~\eqref{FORM3}. 
\begin{proposition}
Algorithm~\ref{FPTAS} is an FPTAS for~\eqref{FORM3}.
\end{proposition}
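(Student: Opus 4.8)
The proposition has two halves: that Algorithm~\ref{FPTAS} returns a value $z_{\text{max}}$ with $(1-\epsilon)\,\mathrm{OPT}\le z_{\text{max}}\le\mathrm{OPT}$, where $\mathrm{OPT}=C^{*}P(n,C^{*})$ is the optimum of~\eqref{FORM3}, and that it runs in time polynomial in the input size and in $1/\epsilon$. The upper bound $z_{\text{max}}\le\mathrm{OPT}$ is immediate: each quantity compared in line~\ref{SINGLEITEMSTEP}, namely $\kappa\cdot\hat z(h,j)$ and $c_jp_j$, is the objective value of an actual feasible subset of~\eqref{FORM3} --- in the first case after replacing each profit $c_i$ by the not-larger quantity $\kappa\hat c_i=\kappa\lfloor c_i/\kappa\rfloor\le c_i$ --- and hence is at most $\mathrm{OPT}$.

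For the lower bound I would first set aside the degenerate case $N_{1/2}=\emptyset$: by Corollary~\ref{ONEHALFCOR} an optimal subset then has at most one element, so $\mathrm{OPT}=\max_{i\in[n]}c_ip_i$ and is returned exactly. When $N_{1/2}\neq\emptyset$, the structural observation preceding Algorithm~\ref{FPTAS} --- which applies Corollary~\ref{ONEHALFCOR}, and that corollary is equally valid for the scaled instance since its proof uses only positivity of the profits and the all-or-nothing structure, both preserved under $c\mapsto\hat c$ --- gives $\max_{j=h+1}^{n+1}\hat z(h,j)=\hat z(n,n+1)$, the optimum of the fully scaled dynamic program. Thus $z_{\text{max}}\ge\kappa\cdot\hat z(n,n+1)$, and Lemma~\ref{FPTASLEM} yields $\kappa\cdot\hat z(n,n+1)\ge(1-\epsilon)\,C^{*}P(n,C^{*})=(1-\epsilon)\,\mathrm{OPT}$, as soon as the scaling factor meets that lemma's hypothesis. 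Algorithm~\ref{FPTAS} uses $\kappa=\epsilon\max_{i\in N_{1/2}}\{p_ic_i\}/n$; since any singleton is feasible we have $\max_{i\in N_{1/2}}p_ic_i\le\max_{i\in[n]}p_ic_i\le C^{*}P(n,C^{*})$, and the argument in the proof of Lemma~\ref{FPTASLEM} in fact establishes its conclusion for every $\kappa\le\epsilon\max_{i\in[n]}\{p_ic_i\}/n$, which this choice satisfies --- so the bound holds and the approximation half is done.

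For the running time, the same one-sided restriction to $N_{1/2}$ in the definition of $\kappa$ that made it small enough above also keeps it large enough here. For $i\in N_{1/2}$ we have $p_i\ge 1/2$, hence $\max_{k\in N_{1/2}}p_kc_k\ge p_ic_i\ge c_i/2$ and $\hat c_i=\lfloor c_i/\kappa\rfloor\le c_in/(\epsilon\max_{k\in N_{1/2}}p_kc_k)\le 2n/\epsilon$; meanwhile Corollary~\ref{ONEHALFCOR} lets the at most one low-probability item of an optimum be enumerated (the index $j$) and kept out of any dynamic-programming recursion, so its possibly enormous scaled profit never matters, and the dynamic program that must actually be run ranges only over the items of $N_{1/2}$. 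Consequently the profit axis of the relevant instance of~\eqref{SCALEDDPSOL} has length $\bar C=O(n^2/\epsilon)$; the table $\hat P$ over $N_{1/2}$ is filled once in $O(n\bar C)=O(n^3/\epsilon)$ time, the maximization in~\eqref{SCALEDDPSOL} costs $O(\bar C)$ for each of the $O(n)$ indices $j$, and the remaining work (computing $\kappa$, the singleton comparisons) is $O(n)$, for a total of $O(n^3/\epsilon)$. As every number involved has polynomially many bits --- the $\hat c_i$ are $O(n/\epsilon)$ and the $p_i$ are given --- this is polynomial in the input size and in $1/\epsilon$, and together with the accuracy bound this is the definition of an FPTAS.

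The step I expect to be the crux is exactly this dovetailing: exhibiting one scaling factor that is at once small enough for the $(1-\epsilon)$ guarantee --- routine, via $\max_{i\in N_{1/2}}p_ic_i\le\mathrm{OPT}$ --- and large enough for polynomial running time, the delicate side, since a single low-probability item carrying a huge profit would otherwise force $\kappa$ to be minuscule and $\bar C$ super-polynomial. The reason using $\max_{i\in N_{1/2}}p_ic_i$ rather than $\max_{i\in[n]}p_ic_i$ resolves this is precisely Corollary~\ref{ONEHALFCOR}, which lets the lone low-probability item be peeled off and handled by brute force, leaving only the well-behaved coefficients $\hat c_i$ with $i\in N_{1/2}$ to govern the dynamic program. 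The bookkeeping I would be most careful with is checking that Corollary~\ref{ONEHALFCOR} and the identity $\max_{j}\hat z(h,j)=\hat z(n,n+1)$ genuinely survive the passage to the scaled profits $\hat c$.
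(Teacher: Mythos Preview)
Your proof is correct and follows essentially the same route as the paper --- Corollary~\ref{ONEHALFCOR} confines the dynamic program to the high-probability items while the single possible low-probability item is enumerated as the index $j$, and Lemma~\ref{FPTASLEM} supplies the $(1-\epsilon)$ guarantee. Your analysis is in fact a bit tidier: you explicitly verify $z_{\max}\le\mathrm{OPT}$, dispose of the degenerate case $N_{1/2}=\emptyset$, check that Corollary~\ref{ONEHALFCOR} survives the passage to scaled profits, and by observing that the table $\hat P(h,\cdot)$ is shared across all $j$ and need only be built once you obtain $O(n^{3}/\epsilon)$ rather than the paper's $O(n^{4}/\epsilon)$.
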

\begin{proof}
The following cases need to be considered.
\paragraph{Case $\card{S^*}=1$:} It is straightforward that Algorithm~\ref{FPTAS} outputs an optimal solution determined in step~\ref{SINGLEITEMSTEP}.
\paragraph{Case $\card{ S^*}\geq 2$:} It follows from Corollary~\ref{ONEHALFCOR} that if $\card{ S^*}\geq 2$ then $\card{S^*\setminus N_{1/2}}\leq 1$. Then, consider the following collectively exhaustive subcases:  
\subparagraph{Case $S^*\setminus N_{1/2}=\emptyset$:} 
For each given $\epsilon>0$, $\kappa$ satisfies the supposition of Lemma~\ref{FPTASLEM}. So, following Lemma~\ref{FPTASLEM} with $\bar C=\bar C(h)=\sum_{i\in N_{1/2}}c_i\geq \sum_{i\in S^*} c_i$, 
 \[
 \kappa\cdot \hat z(h,n+1)=\kappa\cdot \hat C\cdot \hat P(h,\bar C)\geq (1-\epsilon) \cdot C^*\cdot P(h,C^*)=P(n,C^*).
 \] 
 
\subparagraph{Case $\card{S^*\setminus N_{1/2}}=1$: } Then 
for each $\epsilon > 0$, the choice of $\kappa$ by Lemma~\ref{FPTASLEM} satisfies for some $j\in [n]\setminus N_{1/2}=\{h+1,\ldots,n\}$ 
\[
\kappa \cdot \hat z(h,j)=\hat z(n,n+1)\geq (1-\epsilon) \cdot C^*\cdot P(n,C^*),
\] and the algorithm must determine $j$ since it enumerates all elements of $[n]\setminus N_{1/2}$ in the main loop (in lines~\ref{BEGINLOOP}-\ref{ENDLOOP}).

The complexity of the algorithm is determined by at most $\card{[n]\setminus N_{1/2}}\leq n$ invocations of~\eqref{SCALEDDPSOL}. Hence, it is \[
O(n^2\bar C)\subseteq O\left(n^2\sum_{i\in N_{\frac{1}{2}}} c_i/\kappa \right)\subseteq O\left(\frac{n^4}{\epsilon}\right).\qedhere
\] 
\end{proof}

\section{Conclusion}

We have established the NP-hardness of all-or-nothing maximum expected value subset. It also implies the hardness of constrained all-or-nothing subset problems in different graph settings. In particular one may consider an all-or-nothing maximum expected value matching, a similar problem, with activities and feasible subsets corresponding to edges and matchings in a graph, respectively. In ongoing work we develop an approximation scheme for this problem.

\section*{Acknowledgement}

Noam Goldberg thanks Naoki Katoh for discussing~\cite{Katoh92} and referring him to~\cite{Matsui96} and~\cite{Hammer02}, John Dickerson for a discussion of kidney exchange and referring to~\cite{Dickerson13}, and also  Martin Milani\v c for comments. 

\bibliography{interdict}

\end{document}